\documentclass[aps,pre,twocolumn,groupedaddress]{revtex4-2}
\usepackage{tikz}
\usepackage{footnote}

\usepackage{caption}
\usepackage{amsmath,amssymb}
\usepackage{cleveref}

\usepackage[noend]{algpseudocode} 
\usepackage{algorithm}
\usepackage{amsthm}

\newtheorem{theorem}{Theorem}
\newtheorem{example}[theorem]{Example}

\usepackage{bbm}

\crefname{figure}{figure}{figures}
\Crefname{figure}{Figure}{Figures}
\crefname{algorithm}{Algorithm}{figures}
\Crefname{algorithm}{Algorithm}{Figures}
\crefname{corollary}{Corollary}{corollary}
\Crefname{corollary}{Corollary}{corollary}
\crefname{section}{Section}{section}
\Crefname{section}{Section}{section}

\newcounter{comments}

\begin{document}


\title[Uniform sampling of canalizing Boolean functions]{Uniform sampling of canalizing Boolean functions reveals hidden biases in Boolean network analysis}

\author{Ahana Ghosh}
\affiliation{Iowa State University}

\author{Claus Kadelka}
\email{ckadelka@iastate.edu}
\affiliation{Iowa State University}

\date{\today}

\begin{abstract}
Boolean networks are widely used to model gene regulatory systems. Their structural and dynamical properties are commonly interpreted by comparison with ensembles of random Boolean networks generated by sampling Boolean functions for individual nodes. Canalizing and nested canalizing functions, in which one or more regulatory inputs dominate the output, capture an important feature of gene regulation. These functions are typically generated by sampling their defining parameters uniformly at random. Because multiple parameterizations can represent the same Boolean function, however, this procedure induces a biased distribution over functions and consequently over null models. We develop efficient algorithms for uniformly sampling Boolean functions with prescribed canalizing depth, thereby correcting this systematic bias. Using these unbiased null models, we show that the sampling measure substantially alters function- and network-level properties. Whereas parameter-uniform sampling yields nested canalizing functions with expected average sensitivity one, these sensitivities increase with degree under function-uniform sampling and approach 1.183. These differences alter expectations for robustness, attractor structure, and stability. Reanalysis of 122 published Boolean gene regulatory network models reveals substantially stronger enrichment of low-sensitivity canalizing architectures than previously recognized. Widely used parameter-based null models therefore systematically underestimate baseline sensitivity and overestimate the stabilizing role of canalization.
\end{abstract}

\keywords{Canalization, Boolean networks, Gene regulatory networks, Network dynamics, Null models, Systems biology}

\maketitle

\section*{Key Messages}
\begin{itemize}
\item We develop efficient algorithms for uniform sampling of Boolean functions with exact and minimal canalizing depth, correcting biases inherent in widely used parameter-based sampling methods.
\item We show that parameter-uniform sampling systematically overrepresents low-sensitivity, highly canalizing structures, thereby distorting baseline expectations for key function-level metrics such as average sensitivity.
\item We demonstrate that these sampling biases propagate to Boolean network analyses, affecting conclusions about long-term dynamics, stability, and the inferred enrichment of canalizing architectures in gene regulatory networks.
\end{itemize}

\section{Introduction}
Boolean networks are a widely used framework for modeling gene regulatory systems, where nodes represent genes and Boolean functions encode regulatory logic~\cite{KAUFFMANfirstpaper}. A central component of such analyses is the construction of appropriate null models, typically ensembles of random Boolean networks generated from prescribed classes of Boolean functions, against which structural and dynamical properties can be compared. In particular, canalizing and nested canalizing functions (NCFs) have received sustained attention due to their biological relevance and their association with robustness and stability in network dynamics~\cite{he2016stratification,jarrah2007nested,kadelka2026canalization,kadelka2017influence,kauffman1974large,kauffman2003random,li2013boolean}. Random ensembles of canalizing Boolean functions are frequently used to assess whether observed features of gene regulatory networks deviate from randomness~\cite{bavisetty2026attractors,daniels2018criticality, kadelka2024meta, kadelka2024canalization}.

Random ensembles of canalizing Boolean functions are commonly generated by sampling canalizing inputs, canalized outputs, and variable orderings uniformly at random~\cite{jarrah2007nested,kadelka2024meta,li2013boolean}. This strategy is implemented, for example, in the widely used \texttt{BoolNet} package~\cite{mussel2010boolnet}. However, because multiple parameterizations can represent the same Boolean function, parameter-uniform sampling induces a non-uniform distribution over distinct Boolean functions. The underlying reason is combinatorial: the mapping from parameterizations to Boolean functions is many-to-one, and different functions admit vastly different numbers of representations. 

Consequently, parameter-uniform sampling systematically overrepresents functions with large canalizing layers and few total layers, biasing random ensembles toward low-sensitivity, highly canalizing functions.

Although the combinatorial structure of canalizing Boolean functions has been extensively characterized~\cite{dimitrova2022revealing,he2016stratification,kadelka2017multistate,kadelka2017influence,li2013boolean}, these results have not led to scalable methods for sampling Boolean functions uniformly from biologically relevant canalizing function classes. To address this problem, we develop efficient exact sampling algorithms for uniformly generating Boolean functions with prescribed canalization properties, correcting the sampling bias induced by parameter-uniform sampling. We then use these algorithms to quantify how the choice of sampling measure alters the statistical properties of Boolean functions, the dynamics of random Boolean networks, and the interpretation of biological gene regulatory network models. Our results demonstrate that correcting this sampling bias is essential for constructing appropriate null models and accurately assessing the stabilizing role of canalization in gene regulatory networks.

\section{Results}

\subsection{The origin of sampling bias}

Canalizing Boolean functions are commonly classified according to their
\emph{canalizing depth} and \emph{canalizing layer structure}
\cite{he2016stratification,kadelka2017influence,layne2012nested}.
A Boolean function has canalizing depth $k$ if $k$ variables can be
recursively fixed before evaluating a remaining core function, and
nested canalizing functions (NCFs) are precisely those with canalizing
depth equal to the number of inputs \cite{kauffman2003random}. Every
canalizing function admits a unique decomposition into consecutive
canalizing layers, described by its layer structure
$\lambda=(k_1,\ldots,k_r)$, where $k_i$ denotes the number of variables
in the $i$th layer and $k_1+\cdots+k_r=k$.

The source of the sampling bias lies in the combinatorial multiplicity
of these layer structures. As shown in Appendices \ref{DP Recursion} and \ref{Correctness}, for a
fixed number of inputs $n$, the number of distinct canalizing Boolean
functions with layer structure $\lambda$ is proportional to

\[
\omega(\lambda)=
\frac{1}{\prod_{i=1}^r k_i!}.
\]

Thus, functions with many small canalizing layers are substantially
more abundant than functions with few, large layers. Importantly, this
proportionality holds both for functions with exact canalizing depth
$k$ and for functions with canalizing depth at least $k$.

For example, the four possible layer structures of four-input nested
canalizing functions, $(4)$, $(1,3)$, $(2,2)$, and $(1,1,2)$, differ
by more than an order of magnitude in abundance (table~\ref{tab:layer_weights}).

\begin{table}
\centering
\begin{tabular}{c c}
\hline
Layer structure & Relative number of functions \\
\hline
$(4)$       & $4!/4! = 1$ \\
$(1,3)$     & $4!/(1! \cdot 3!) = 4$ \\
$(2,2)$     & $4!/(2! \cdot 2!) = 6$ \\
$(1,1,2)$   & $4!/(1! \cdot 1! \cdot 2!) = 12$ \\
\hline
\end{tabular}
\caption{Relative abundance of $4$-input NCFs with specific layer structure.}
\label{tab:layer_weights}
\end{table}

Current approaches generate canalizing Boolean functions by sampling their defining parameters---canalizing inputs, canalized outputs, variable orderings, and core function---uniformly at random. Because the mapping from parameterizations to Boolean functions is many-to-one, this procedure samples parameterizations uniformly rather than Boolean functions uniformly. In particular, parameter-uniform sampling assigns equal probability to every layer structure, whereas function-uniform sampling must assign probabilities proportional to $\omega(\lambda)$. Consequently, parameter-uniform sampling systematically overrepresents functions with few, large canalizing layers and biases random ensembles toward these low-sensitivity, highly canalizing functions.

For the four-input example, function-uniform sampling requires (see table~\ref{tab:layer_weights})

\[
12\mathbb P((4))
=
3\mathbb P((1,3))
=
2\mathbb P((2,2))
=
\mathbb P((1,1,2)),
\]

yielding

\[
\mathbb P((4))
=
\frac{1}{23},\
\mathbb P((1,3))
=
\frac{4}{23},\
\mathbb P((2,2))
=
\frac{6}{23},\
\mathbb P((1,1,2))
=
\frac{12}{23}.
\]

Thus, the layer structure $(1,1,2)$, which corresponds to the largest number of distinct NCFs, must be sampled most frequently, whereas the layer structure $(4)$, which corresponds to the smallest number of distinct NCFs, must be sampled least frequently. 
This inversion of combinatorial multiplicities is the key to achieving uniform sampling over Boolean functions and forms the basis of the algorithms developed below.

\subsection{Uniform sampling of canalizing Boolean functions}

The preceding analysis shows that function-uniform sampling requires layer structures to be generated with probabilities proportional to their combinatorial weight $\omega(\lambda)$ rather than uniformly. Directly enumerating all layer structures is, however, infeasible for large numbers of variables because their number grows exponentially with the canalizing depth. We therefore develop efficient exact sampling algorithms that achieve function-uniform sampling without explicit enumeration.

\paragraph{Functions with prescribed canalizing depth.}

We first consider Boolean functions with exact canalizing depth $k$, including nested canalizing functions ($k=n$). Since the canalizing layer structure is completely determined by the sequence of canalized output values, uniform function sampling can be achieved by sampling this sequence with carefully chosen probabilities while continuing to sample canalizing input values, variable orderings, and the non-canalizing core function uniformly at random.

To accomplish this, we construct the canalizing layer structure
sequentially using a dynamic programming algorithm. Beginning with the
first canalized output value, the algorithm repeatedly decides whether
the current canalizing layer should be extended or whether a new layer
should be started. These decisions are based on recursively computed
weights that exactly compensate for the combinatorial multiplicities of
all remaining layer structures. Consequently, every canalizing layer
structure is generated with probability proportional to
$\omega(\lambda)$, yielding uniform sampling over distinct Boolean
functions. Details of the dynamic programming recursion and a proof of
correctness are provided in Appendices \ref{DP Recursion} and \ref{Correctness}.

\paragraph{Functions with prescribed minimal canalizing depth.}

For general $k$-canalizing Boolean functions, whose canalizing depth is
at least $k$, an additional correction is required. If the sampled core
function is itself canalizing, its first canalizing layer may merge with
the final sampled layer, causing certain Boolean functions to be
generated more frequently than others. We correct this overcounting by
an efficient rejection-sampling procedure that accepts merged
representations with probabilities inversely proportional to their
multiplicity. This correction restores uniform sampling over all
Boolean functions with prescribed minimal canalizing depth.

The dynamic programming recursion requires only $O(k^2)$ preprocessing, after which individual layer structures can be generated in linear time. The rejection step is also efficient: its rejection probability is at most $5/8$ and decreases rapidly with the number of core variables (Appendix~\ref{random sampling}). Together, these algorithms enable exact function-uniform sampling across biologically relevant classes of canalizing Boolean functions.

\subsection{Function-uniform sampling revises properties of canalizing Boolean functions}

Function-uniform sampling substantially changes the statistical properties of randomly generated canalizing Boolean functions. To quantify these effects, we compared parameter-uniform and function-uniform ensembles of nested canalizing functions (NCFs) with $n\in\{2,\ldots,8\}$ inputs. Unless stated otherwise, all reported averages are taken over uniformly sampled Boolean functions within the corresponding ensemble.

\begin{figure}
    \centering
    \includegraphics[width=0.96\columnwidth]{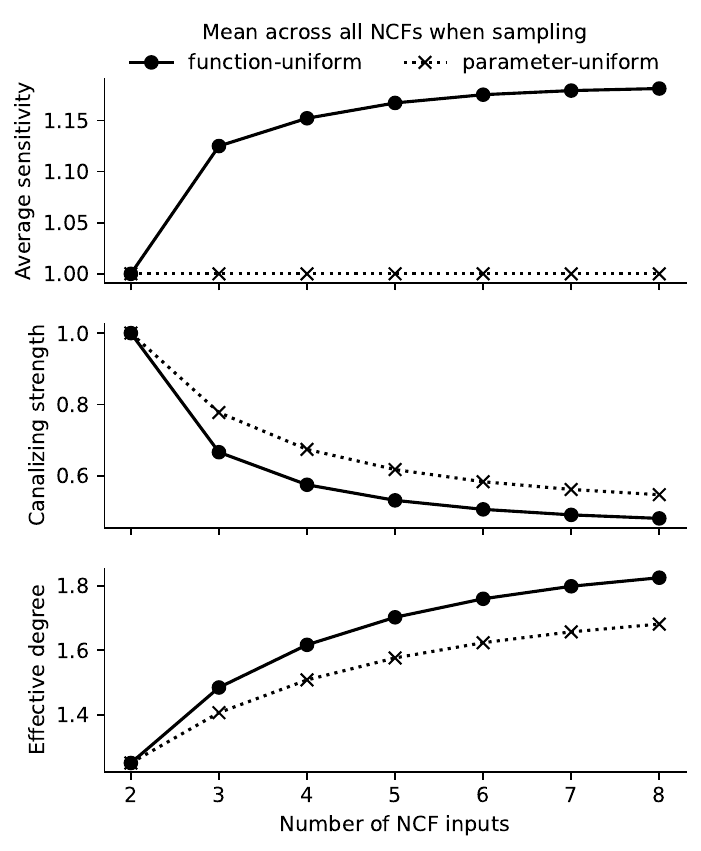}   
    \caption{Function-uniform sampling alters the average properties of nested canalizing functions. Mean average sensitivity (top), mean canalizing strength (middle), and mean effective degree (bottom) are shown as functions of the number of inputs under function-uniform (solid lines) and parameter-uniform (dotted lines) sampling.}
    \label{fig:impact_summary}
\end{figure}

Average sensitivity provides a particularly informative example because it is closely linked to perturbation propagation and dynamical stability in Boolean networks~\cite{shmulevich2004activities,derrida1986random}.  Whereas parameter-uniform sampling yields NCFs with expected average sensitivity one independent of the number of inputs, function-uniform sampling produces sensitivities that increase with degree and approach approximately $1.183$ (figure~\ref{fig:impact_summary}). Thus, the widely cited value of one reflects the sampling measure rather than an intrinsic property of nested canalizing functions.

The choice of sampling measure similarly affects other commonly used measures of canalization. Canalizing strength and effective degree quantify collective canalization and input redundancy; higher canalizing strength and lower effective degree correspond to stronger canalization~\cite{kadelka2023collectively,gates2021effective}.
Compared with parameter-uniform sampling, function-uniform sampling yields lower canalizing strength and higher effective degree (figure~\ref{fig:impact_summary}). Together, these results demonstrate that parameter-uniform sampling systematically favors Boolean functions exhibiting stronger canalization and lower sensitivity than expected under a uniform distribution over distinct functions.

The origin of these differences can be understood by examining the relationship between canalizing layer structure and sampling weight. Because parameter-uniform sampling assigns equal probability to parameterizations rather than functions, the relative sampling weight of a Boolean function decreases approximately exponentially with its average sensitivity (figure~\ref{fig:impact_detailed}). Similar relationships are observed for canalizing strength and effective degree, reflecting the underlying combinatorial dependence of parameter multiplicity on canalizing layer structure. Parameter-uniform sampling therefore suppresses highly sensitive canalizing functions while overrepresenting strongly canalizing, low-sensitivity functions.

\begin{figure*}
    \centering
    \includegraphics[width=0.75\textwidth]{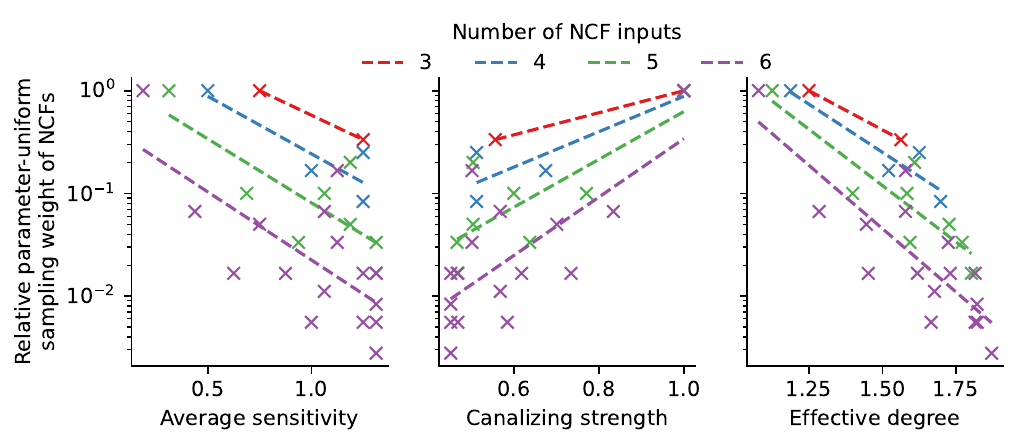}
    \caption{Parameter-uniform sampling suppresses high-sensitivity nested canalizing functions. Relative parameter-uniform sampling weight is shown as a function of average sensitivity (left), canalizing strength (middle), and effective degree (right). Each point represents an NCF layer structure, and dashed lines show exponential fits separately for each number of inputs.}
    \label{fig:impact_detailed}
\end{figure*}

\subsection{Function-uniform sampling alters Boolean network dynamics}

The differences between parameter-uniform and function-uniform sampling naturally propagate from individual Boolean functions to the dynamics of Boolean networks. Because parameter-uniform sampling preferentially selects low-sensitivity, highly canalizing functions, it generates random Boolean networks with systematically different dynamical properties than function-uniform sampling.

\begin{figure}
    \centering
    \includegraphics[width=0.85\columnwidth]{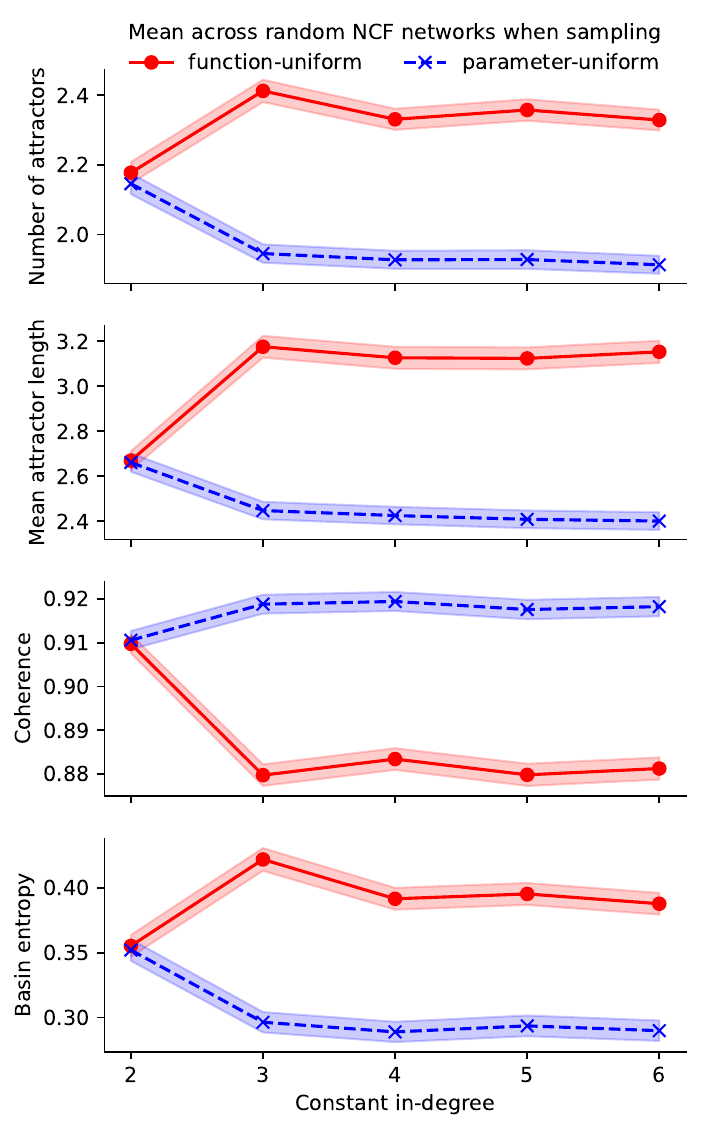}
    \caption{Parameter-uniform sampling favors more ordered Boolean network dynamics. Mean number of attractors, mean attractor length, mean coherence, and mean basin entropy are shown as functions of constant in-degree for networks governed by NCFs under function-uniform (solid red lines) and parameter-uniform (dashed blue lines) sampling. Shaded regions indicate 95\% confidence intervals.}
    \label{fig:dynamics}
\end{figure}

To quantify these effects, we generated ensembles of random Boolean networks composed exclusively of NCFs sampled under either parameter-uniform or function-uniform sampling. For each constant in-degree $n\in\{2,\ldots,6\}$, we generated 10,000 networks consisting of $N=12$ nodes and analyzed their exact synchronous dynamics using BoolForge~\cite{kadelka2026boolforge}.

Compared with function-uniform sampling, parameter-uniform sampling consistently produced networks with fewer attractors, shorter attractor cycles, higher coherence, and lower basin entropy (figure~\ref{fig:dynamics}). Notably, all four dynamical measures shift consistently toward a more ordered regime. This pattern agrees with the established relationship between canalization, reduced sensitivity, and dynamical stability in Boolean networks~\cite{kauffman2004genetic,shmulevich2004activities,karlsson2007order,krawitz2007basin,bavisetty2026attractors}. The coordinated shift across these dynamical properties is consistent with the overrepresentation of low-sensitivity functions under parameter-uniform sampling, which suppresses perturbation propagation and favors more ordered network dynamics.

These results demonstrate that the choice of sampling measure systematically influences commonly studied dynamical properties of random Boolean networks. Consequently, network-level conclusions drawn from parameter-uniform null models cannot be interpreted solely as consequences of canalization, but also reflect the underlying sampling measure used to generate the Boolean functions.

\subsection{Function-uniform null models reveal stronger enrichment of canalizing architectures in biological networks}

The impact of the sampling measure extends beyond random Boolean network ensembles to the interpretation of biological gene regulatory networks. Random canalizing Boolean functions are widely used as null models to assess whether regulatory logic observed in biological systems differs from random expectation. If the null model itself is biased, however, the inferred degree of biological enrichment will also be biased.

To quantify this effect, we revisited the meta-analysis of 122 published Boolean gene regulatory network models performed in~\cite{kadelka2024meta}. For NCFs with $n\in\{3,4,5\}$ inputs, we compared the observed abundance of each canalizing layer structure against its expected abundance under both parameter-uniform and function-uniform sampling.

The choice of sampling measure substantially changes the inferred enrichment of biologically observed canalizing architectures (figure~\ref{fig:specific_layers}). Relative to function-uniform expectations, low-sensitivity layer structures are considerably more enriched, whereas highly layered and more sensitive structures are less abundant than suggested by parameter-uniform null models. This difference becomes increasingly pronounced with increasing in-degree.

\begin{figure}
    \centering
    \includegraphics[width=0.99\columnwidth]{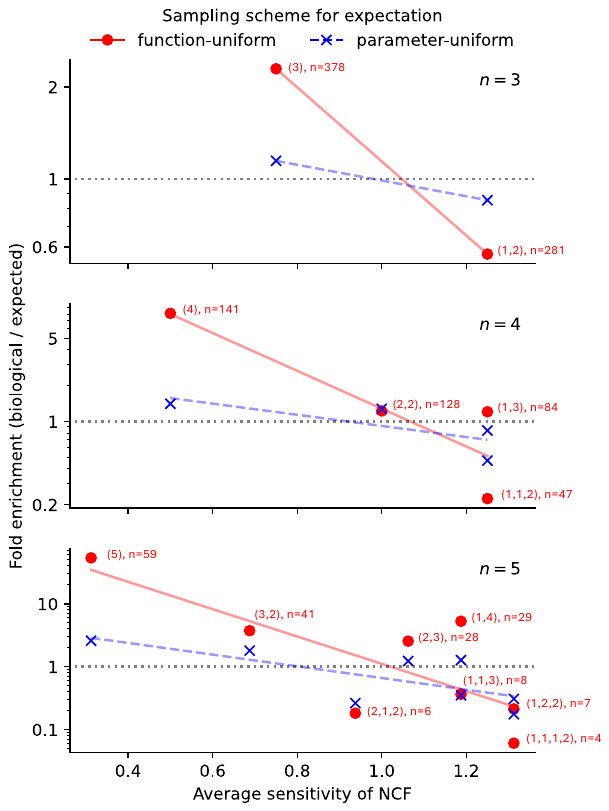}
    \caption{Function-uniform null models reveal stronger enrichment of low-sensitivity canalizing architectures in biological networks. Fold enrichment of NCF layer structures relative to function-uniform (red circles) and parameter-uniform (blue crosses) expectations is shown as a function of average sensitivity for NCFs with three (top), four (middle), and five (bottom) inputs. Red data points are labeled by layer structure and the corresponding number of NCFs observed in biological models. Colored lines show exponential fits to the enrichment trends, and the horizontal dotted line indicates the null expectation of no enrichment.}
    \label{fig:specific_layers}
\end{figure}

These results indicate that gene regulatory networks are enriched not only for canalization, but even more strongly for low-sensitivity canalizing architectures than previously recognized. Consequently, parameter-uniform null models systematically underestimate the extent to which biological regulatory logic deviates from random expectation.

\section{Discussion}

The construction of null models is a fundamental component of Boolean network analysis because conclusions about biological organization are inherently comparative. Observed structural and dynamical properties are typically interpreted relative to ensembles of randomly generated Boolean networks that preserve selected characteristics of the biological system. Our results demonstrate that the probability measure used to generate these ensembles is itself a modeling assumption with substantial consequences. Parameter-uniform and function-uniform sampling do not represent equivalent implementations of the same null model; rather, they correspond to distinct distributions over Boolean functions and therefore to distinct biological hypotheses.

Function-uniform sampling provides a principled baseline whenever the objective is to compare biological regulatory logic with the space of all Boolean functions satisfying prescribed canalization constraints. In contrast, parameter-uniform sampling is not inherently incorrect. Instead, it represents a different generative model in which parameterizations, rather than Boolean functions, are sampled uniformly. Such a model may be appropriate when the parameterization itself reflects the biological process of interest. However, it should not be interpreted as a uniform null model over Boolean functions.

The distinction between these two sampling measures has important practical consequences. We showed that parameter-uniform sampling systematically favors low-sensitivity, highly canalizing functions, leading to random Boolean networks with enhanced dynamical stability. Consequently, previous studies comparing biological networks against parameter-uniform null models have underestimated the extent to which biological regulatory logic is enriched for stabilizing canalizing architectures. Correcting this bias reveals that biological networks depart more strongly from random expectation than previously appreciated.

Although we focused on canalizing Boolean functions, the underlying issue is considerably more general. Whenever multiple parameterizations correspond to the same mathematical object, sampling parameterizations uniformly need not induce a uniform distribution over the objects themselves. Similar sampling biases may therefore arise in other classes of Boolean functions and more broadly in the construction of null models across computational biology and network science. Identifying the appropriate probability measure should thus be regarded as a central component of null model design rather than a technical implementation detail.

The algorithms presented here provide an efficient solution to this problem for biologically relevant classes of canalizing Boolean functions. By enabling exact function-uniform sampling, they establish a principled foundation for future studies of Boolean network dynamics, robustness, and the statistical organization of gene regulatory systems.

\section{Methods}

\subsection{Canalizing Boolean functions}

We briefly summarize the definitions and notation used throughout the paper; more detailed treatments are available in~\cite{he2016stratification,kadelka2017influence,li2013boolean}. A Boolean function $f:\{0,1\}^n\to\{0,1\}$ is \emph{canalizing} if fixing at least one input variable $x_i$ to a particular \emph{canalizing input value} $a\in\{0,1\}$ sets $f$ to a \emph{canalized output value} $b\in\{0,1\}$, irrespective of the remaining inputs~\cite{kauffman1974large}.

More generally, a Boolean function $f$ is \emph{$k$-canalizing} if there exist distinct variables $x_{\sigma(1)},\ldots,x_{\sigma(k)}$, canalizing input values $a_1,\ldots,a_k\in\{0,1\}$, canalized output values $b_1,\ldots,b_k\in\{0,1\}$, and a core function $g:\{0,1\}^{n-k}\to\{0,1\}$, with $g\not\equiv b_k$, such that
\[
f =
\begin{cases}
b_1, & x_{\sigma(1)}=a_1,\\
b_2, & x_{\sigma(1)}\neq a_1,\ x_{\sigma(2)}=a_2,\\
\vdots\\
b_k, & x_{\sigma(1)}\neq a_1,\ldots,x_{\sigma(k-1)}\neq a_{k-1},\ x_{\sigma(k)}=a_k,\\
g, & x_{\sigma(1)}\neq a_1,\ldots,x_{\sigma(k)}\neq a_k.
\end{cases}
\]
The \emph{canalizing depth} of $f$ is the largest $k$ for which such a representation exists~\cite{he2016stratification,layne2012nested}. A \emph{nested canalizing function} (NCF) has canalizing depth $n$; equivalently, all variables are canalizing in succession and the remaining core is the constant function $g=1-b_n$~\cite{kauffman2003random}.

Every canalizing Boolean function admits a unique decomposition of its canalizing variables into consecutive \emph{canalizing layers}~\cite{he2016stratification}. Its \emph{layer structure} is denoted by $\lambda=(k_1,\ldots,k_r)$, where $k_i$ is the number of canalizing variables in the $i$th layer and $k_1+\cdots+k_r=k$, with $k$ the canalizing depth~\cite{kadelka2017influence}. The layer structure is determined by the sequence of canalized output values $b_1,\ldots,b_k$: consecutive variables $x_{\sigma(i)}$ and $x_{\sigma(i+1)}$ belong to the same layer if and only if $b_i=b_{i+1}$. Thus, sampling the sequence of canalized output values is equivalent to sampling the canalizing layer structure, an observation that underlies the function-uniform sampling algorithm developed in this paper.

\subsection{Generation and analysis of Boolean functions}

All canalizing Boolean functions were generated and analyzed using the BoolForge software package~\cite{kadelka2026boolforge}. Under parameter-uniform sampling, we followed the standard construction implemented in \texttt{BoolNet}~\cite{mussel2010boolnet} and elsewhere: canalizing variables (whenever $k<n$), their ordering, canalizing input values, and canalized output values were sampled uniformly at random, together with the core function when applicable. Function-uniform sampling used the algorithms developed in this work.

For the function-level comparison of the two sampling measures, we considered all NCFs with $n\in\{2,\ldots,8\}$ inputs when comparing mean average sensitivity, canalizing strength, and effective degree (figure~\ref{fig:impact_summary}). The average sensitivity of a Boolean function is the expected number of output changes caused by flipping a single input, summed over all inputs, with input states sampled uniformly from $\{0,1\}^n$~\cite{shmulevich2004activities}. To examine the relationship between parameter-uniform sampling weight and these function-level properties, we considered all layer structures of NCFs with $n\in\{3,\ldots,6\}$ inputs (figure~\ref{fig:impact_detailed}). All quantities were computed exactly using BoolForge.

\subsection{Boolean network dynamics}

A Boolean network consists of $N$ nodes with binary states $x_i\in\{0,1\}$, each governed by a Boolean function $f_i$ that determines its state from those of its regulators. Under synchronous update, all nodes are updated simultaneously, defining a deterministic map $F:\{0,1\}^N\to\{0,1\}^N$. Because the state space is finite, every trajectory eventually reaches a periodic orbit, referred to as an \emph{attractor}. In biological Boolean network models, these attractors typically correspond to phenotypes or cell types. The set of states whose trajectories eventually reach a given attractor constitutes its \emph{basin of attraction}.

We characterized network dynamics using the number and mean length of attractors, coherence, and basin entropy. Coherence, also known as phenotypical robustness, describes the probability that two random binary network states $x,y \in \{0,1\}^N$ that differ in exactly one bit will transition to the same attractor~\cite{willadsenwiles,kadelka2013stabilizing,park2023models,kadelka2023modularity}. Basin entropy quantifies the diversity of attractor basin sizes and is defined as
\[
H_{\mathrm{basin}}=-\sum_{\alpha}p_\alpha\log p_\alpha,
\]
where $p_\alpha$ denotes the fraction of network states belonging to the basin of attractor $\alpha$~\cite{krawitz2007basin}.

\subsection{Generation and analysis of random Boolean networks}

Random Boolean networks consisted of $N=12$ nodes with constant in-degree $n\in\{2,\ldots,6\}$. For each node, $n$ regulators were selected uniformly at random without replacement from the remaining $N-1$ nodes, thereby excluding self-regulation. For each combination of in-degree and sampling measure, $10\,000$ independent networks were generated by assigning nested canalizing Boolean functions independently to all nodes.

Using BoolForge~\cite{kadelka2026boolforge}, network dynamics were analyzed under synchronous update using the exact state transition graph. We computed the number of attractors, mean attractor lengths, coherence, and basin entropy for every network and report in figure~\ref{fig:dynamics} the corresponding sample means together with 95\% confidence intervals.

\subsection{Meta-analysis of published Boolean network models}

To assess the biological implications of the sampling measure, we revisited the collection of 122 published Boolean gene regulatory network models compiled in~\cite{kadelka2024meta}. NCFs with three, four, or five inputs were grouped according to their canalizing layer structure. For each layer structure, fold enrichment was calculated as the observed frequency divided by the expected frequency under either parameter-uniform or function-uniform sampling. Average sensitivities of the corresponding layer structures were computed exactly using BoolForge~\cite{kadelka2026boolforge}, and enrichment trends were summarized separately for each in-degree by linear regression of log-transformed fold enrichment $E$ against average sensitivity $s$,
\[
\log E = m s + c.
\]

\subsection*{Data accessibility}
All code required to rerun the analyses and recreate all figures is available on Github at \url{https://github.com/ckadelka/boolforge-uniform-samplers}. The function-uniform sampling algorithms developed in this work are implemented in BoolForge v1.0.0~\cite{kadelka2026boolforge}, and exposed through the user-facing \texttt{random\_function} and \texttt{random\_k\_canalizing\_function} routines: \url{https://pypi.org/project/boolforge}

\subsection*{Author contributions}
A.G.: formal analysis, methodology, writing - original draft, review \& editing; C.K.: conceptualization, formal analysis, funding acquisition, methodology, software, writing - original draft, review \& editing

\subsection*{Funding}
This work was supported by the National Science Foundation (grant nos. DMS-2424632 (to C.K.)and DMS-2451973 (to C.K.)).

\appendix
\setlength{\parindent}{0pt}

\section*{Appendix: Mathematical details and proofs}

\subsection{Combinatorial multiplicity of layer structures}
\label{comb mul}

For fixed numbers of inputs $n$ and canalizing variables $k$, let $\#\mathrm{CF}_{n,\geq k}(\lambda)$ denote the number of distinct Boolean functions of canalizing depth at least $k$ whose first $k$ canalizing variables have layer structure $\lambda=(k_1,\dots,k_r)$, where $k_1+\cdots+k_r=k$. This number can be determined by counting the independent choices in their construction \cite{dimitrova2022revealing,he2016stratification}:
\begin{itemize}
\item $\binom{n}{k}$ ways to choose the $k$ canalizing variables;
\item $\frac{k!}{k_1!\cdots k_r!}$ ways to assign these variables to the $r$ ordered layers of sizes $k_1,\dots,k_r$, since permutations of variables within the same layer do not change the represented Boolean function~\cite{li2013boolean};
\item $2^k$ choices of canalizing input values, one for each of the $k$ canalizing variables;
\item $2$ choices for the first canalized output value $b_1$, since all subsequent canalized output values $b_2,\ldots,b_k$ are determined by the layer structure $\lambda$~\cite{kadelka2017influence}; and
\item $2^{2^{n-k}}-1$ choices for the core function on the remaining $n-k$ variables, which may be any Boolean function except the constant function equal to $b_k$. Since the canalizing depth is required only to be at least $k$, the core function may itself be canalizing.
\end{itemize}
By uniqueness of the canonical layer decomposition, distinct choices counted above correspond to distinct Boolean functions. Multiplying these factors gives
\[
\#\mathrm{CF}_{n,\geq k}(\lambda)
= \binom{n}{k}\,2^{k+1}\,
\frac{k!}{\prod_{i=1}^r k_i!}
\left(2^{2^{n-k}}-1\right).
\]
For fixed $n$ and $k$, only the factor $\prod_{i=1}^r k_i!$ depends on the layer structure $\lambda$. Therefore,
\[
\#\mathrm{CF}_{n,\geq k}(\lambda)
\propto
\frac{1}{\prod_{i=1}^r k_i!}
=: \omega(\lambda).
\]

Similarly, let $\#\mathrm{CF}_{n,=k}(\lambda)$ denote the number of distinct Boolean functions of exact canalizing depth $k$ with layer structure $\lambda=(k_1,\dots,k_r)$. The counting argument is identical except for the admissible core functions. To ensure that $k=k_1+\cdots+k_r$ is the exact canalizing depth, the core function must be non-canalizing. The number of admissible non-canalizing core functions is known~\cite[Theorem 5.4]{he2016stratification} and depends on $n$ and $k$, but not on the particular layer structure $\lambda$. Consequently,
\[
\#\mathrm{CF}_{n,=k}(\lambda)
\propto
\frac{1}{\prod_{i=1}^r k_i!}
= \omega(\lambda).
\]
Thus, for both prescribed minimal and prescribed exact canalizing depth, the number of distinct Boolean functions associated with a layer structure is proportional to $\omega(\lambda)$. In particular, layer structures consisting of many small layers correspond to more distinct Boolean functions than structures consisting of few large layers, because permutations of variables within a layer leave the Boolean function unchanged.

\subsection{Function-uniform sampling of canalizing layer structures}
\label{weight function}

As shown in Appendix~\ref{comb mul}, for fixed $n$ and $k$, the number of distinct Boolean functions associated with a canalizing layer structure
\[
\lambda=(k_1,\ldots,k_r), \qquad k_1+\cdots+k_r=k,
\]
is proportional to
\[
\omega(\lambda)
=
\frac{1}{\prod_{i=1}^r k_i!}.
\]
Consequently, uniform sampling over distinct Boolean functions requires sampling layer structures with probabilities proportional to $\omega(\lambda)$.

The layer structure of the $k$ canalizing variables is completely determined by their canalized output values $\mathbf b=(b_1,\ldots,b_k)$~\cite{kadelka2017influence}: consecutive canalizing variables belong to the same layer if and only if their canalized output values agree. Thus, after choosing $b_1$ uniformly from $\{0,1\}$, the layer structure can be constructed sequentially by deciding at each step whether
\[
b_i=b_{i-1},
\]
which extends the current layer, or
\[
b_i\neq b_{i-1},
\]
which closes the current layer and starts a new one. The probabilities of these decisions cannot in general be chosen uniformly, because different decisions leave different total combinatorial weights among the possible layer-structure completions. We therefore compute these remaining weights by dynamic programming (DP).

\begin{example}
\label{probability tree}
Consider NCFs with $n=4$ inputs. The possible layer structures and their relative weights are
\[
\begin{array}{c|cccc}
\lambda &(4)&(1,3)&(2,2)&(1,1,2)\\
\hline
24\,\omega(\lambda)&1&4&6&12.
\end{array}
\]
Hence their function-uniform probabilities are respectively
\[
\frac{1}{23},\qquad
\frac{4}{23},\qquad
\frac{6}{23},\qquad
\frac{12}{23}.
\]
After $b_1$ has been chosen, the decision $b_2=b_1$ leaves the structures $(4)$ and $(2,2)$, whose combined weight is $7/24$, whereas $b_2\neq b_1$ leaves $(1,3)$ and $(1,1,2)$, whose combined weight is $16/24$. Thus,
\[
\mathbb P(b_2=b_1)=\frac{7}{23},
\qquad
\mathbb P(b_2\neq b_1)=\frac{16}{23}.
\]
Conditioning recursively on the remaining structures gives the probability tree below.

\begin{center}
\begin{tikzpicture}[
  level distance=1.5cm,
  level 1/.style={sibling distance=4cm},
  level 2/.style={sibling distance=2cm},
  every node/.style={inner sep=2pt, font=\small},
  edge label/.style={draw=none, midway, font=\small, fill=white}
]
\node {$b_1$}
  child {
    node {$b_2=b_1$}
    child {
      node {$b_3=b_2$}
      child {
        node [label=below:{$\lambda=(4)$}] {$b_4=b_3$}
        edge from parent node[edge label,left] {$1$}
      }
      edge from parent node[edge label,left] {$1/7$}
    }
    child {
      node {$b_3\neq b_2$}
      child {
        node [label=below:{$\lambda=(2,2)$}] {$b_4=b_3$}
        edge from parent node[edge label,left] {$1$}
      }
      edge from parent node[edge label,right] {$6/7$}
    }
    edge from parent node[edge label,left] {$7/23$}
  }
  child {
    node {$b_2\neq b_1$}
    child {
      node {$b_3=b_2$}
      child {
        node [label=below:{$\lambda=(1,3)$}] {$b_4=b_3$}
        edge from parent node[edge label,left] {$1$}
      }
      edge from parent node[edge label,left] {$1/4$}
    }
    child {
      node {$b_3\neq b_2$}
      child {
        node [label=below:{$\lambda=(1,1,2)$}] {$b_4=b_3$}
        edge from parent node[edge label,left] {$1$}
      }
      edge from parent node[edge label,right] {$3/4$}
    }
    edge from parent node[edge label,right] {$16/23$}
  };
\end{tikzpicture}
\end{center}

Multiplying the conditional probabilities along each path gives
\[
\begin{array}{c|cccc}
\lambda &(4)&(1,3)&(2,2)&(1,1,2)\\
\hline
\mathbb P(\lambda)
&\frac{1}{23}&\frac{4}{23}&\frac{6}{23}&\frac{12}{23},
\end{array}
\]
as required.
\end{example}

\subsection{Dynamic programming recursion}
\label{DP Recursion}

After $b_1$ has been chosen uniformly at random, the sequential construction of the canalized outputs $\mathbf b$ is described by a state $(m,s)$, where $m$ is the number of canalized outputs that remain to be assigned and $s$ is the size of the current open canalizing layer. The initial state is therefore
\[
(m,s)=(k-1,1).
\]

From a state $(m,s)$ with $m>0$, there are two possible transitions:
\begin{enumerate}
    \item Extending the current layer by setting $b_i=b_{i-1}$ gives
\[
(m,s)\longrightarrow(m-1,s+1).
\]
\item Closing the current layer by setting $b_i\neq b_{i-1}$ contributes the factor $1/s!$ to the weight $\omega(\lambda)$, starts a new layer of size one, and gives
\[
(m,s)\longrightarrow(m-1,1).
\]
\end{enumerate}
Let $W(m,s)$ denote the total combinatorial weight of all valid completions from state $(m,s)$, including the eventual contribution of the currently open layer. The two possible transitions give the recursion
\[
W(m,s)
=
W(m-1,s+1)
+
\frac{1}{s!}W(m-1,1),
\qquad m\geq 1.
\]
The boundary conditions depend on the class of functions being sampled. For functions of exact canalizing depth $k<n$, a final canalizing layer of size one is permitted because it is followed by a non-canalizing core function. Hence
\[
W(0,s)=\frac{1}{s!},
\qquad s\geq 1.
\]
For NCFs, where $k=n$, the final layer must contain at least two variables in the unique layer decomposition~\cite{li2013boolean}. Therefore,
\[
W(0,s)=
\begin{cases}
0, & s=1,\\[2mm]
\dfrac{1}{s!}, & s\geq 2.
\end{cases}
\]
In other words, when sampling an NCF, a construction that would terminate with a singleton final layer is inadmissible.

The DP weights directly determine the transition probabilities. From any state $(m,s)$ with $m>0$ and $W(m,s)>0$, the algorithm chooses
\[
\mathbb P(\mathrm{extend}\mid m,s)
=
\frac{W(m-1,s+1)}{W(m,s)}
\]
and
\[
\mathbb P(\mathrm{close}\mid m,s)
=
\frac{\frac{1}{s!}W(m-1,1)}{W(m,s)}.
\]
These probabilities sum to one by the recursion for $W(m,s)$.

For example, in the $n=k=4$ case of Example~\ref{probability tree},
\[
W(3,1)
=
\frac{1}{4!}
+\frac{1}{1!3!}
+\frac{1}{2!2!}
+\frac{1}{1!1!2!}
=
\frac{23}{24},
\]
whereas extending the first layer leads to
\[
W(2,2)
=
\frac{1}{4!}
+\frac{1}{2!2!}
=
\frac{7}{24}.
\]
Consequently,
\[
\mathbb P(b_2=b_1)
=
\frac{W(2,2)}{W(3,1)}
=
\frac{7}{23},
\]
in agreement with Example~\ref{probability tree}.

Since there are $O(k^2)$ possible states $(m,s)$ and each DP entry requires constant time once entries with smaller $m$ have been computed, the table of weights can be constructed in $O(k^2)$ time. Once the table has been computed, generating a layer structure requires exactly $k-1$ sequential decisions and therefore takes $O(k)$ time. The same DP table can be reused for repeated samples with the same value of $k$ and the same boundary condition.

\subsection{Correctness of the dynamic programming sampler}
\label{Correctness}

\begin{theorem}[Correctness of the DP sampling algorithm]
\label{thm:correctness}
Let $\lambda=(k_1,\dots,k_r)$ be a valid layer structure for $k$ canalizing variables, with
\[
\omega(\lambda)
=
\frac{1}{\prod_{i=1}^r k_i!}.
\]
Starting from $(m,s)=(k-1,1)$, let the DP algorithm use the transition probabilities defined above. Then
\[
\mathbb P(\lambda)
=
\frac{\omega(\lambda)}{W(k-1,1)}.
\]
Thus, the algorithm samples layer structures with probability proportional to the number of distinct Boolean functions having that layer structure.
\end{theorem}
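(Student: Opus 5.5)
We argue by telescoping the transition probabilities along the unique path that produces $\lambda$, after first pinning down what $W(m,s)$ actually counts. The first step is a combinatorial characterization of $W$. Define the \emph{completion weight} from state $(m,s)$ as the sum, over all admissible sequences of $m$ extend/close decisions, of $\frac{1}{s'!}\prod_j \frac{1}{\ell_j!}$. Here $s'$ is the final size of the currently open layer, and the $\ell_j$ are the sizes of the layers opened afterwards. Admissibility refers to the boundary condition: a singleton final layer is forbidden in the NCF case. We prove $W(m,s)$ equals this completion weight by induction on $m$. For $m=0$ the only completion closes the open layer of size $s$, which gives exactly the boundary values $W(0,s)$. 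For $m\ge1$ we split on the first decision. Extending yields the completions from $(m-1,s+1)$ with identical weights. Closing contributes the factor $1/s!$ and then the completions from $(m-1,1)$. This is precisely the recursion. In particular $W(k-1,1)=\sum_\lambda \omega(\lambda)$, summed over valid $\lambda$.

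Next, we fix a valid $\lambda=(k_1,\dots,k_r)$. It corresponds bijectively to a decision sequence $d_1,\dots,d_{k-1}$, where $b_i\ne b_{i-1}$ exactly at the layer boundaries. This follows from the stated fact that layer structure is determined by the pattern of equalities among consecutive $b_i$. Let $(m_0,s_0)=(k-1,1),(m_1,s_1),\dots,(m_{k-1},s_{k-1})$ be the visited states, so $m_t=k-1-t$. Since $\lambda$ is valid, every weight along this path is positive, so each step is well defined. The probability of this path is
\[
\prod_{t=1}^{k-1}\frac{c_t\,W(m_t,s_t)}{W(m_{t-1},s_{t-1})},
\]
where $c_t=1/s_{t-1}!$ if $d_t$ is a close and $c_t=1$ if it is an extend.

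The product telescopes to $\frac{W(0,s_{k-1})}{W(k-1,1)}\prod_t c_t$. The closes occur exactly when layers $1,\dots,r-1$ end, at which point the open size equals $k_i$, so $\prod_t c_t=\prod_{i<r}1/k_i!$. The final open size is $s_{k-1}=k_r$, and validity gives $W(0,k_r)=1/k_r!$; in the NCF case $k_r\ge2$, so the zero boundary value is never hit. The product is therefore $\omega(\lambda)/W(k-1,1)$. Finally, $b_1$ is chosen independently and uniformly, so it only doubles the count by symmetry. Proportionality to the number of functions then follows from Appendix~\ref{comb mul}.

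The main obstacle is bookkeeping rather than depth. We must check that the state's $s$ coincides with the current layer size at each close, so that the factors match $k_i!$. We must also check that the positivity of intermediate $W$ values, needed for the transition probabilities to be defined, holds along every path realizing a valid $\lambda$. The latter follows from the first step, since $W(m_t,s_t)\ge\omega(\lambda)\prod_{\tau\le t}c_\tau^{-1}>0$. Invalid sequences, such as a singleton final layer for NCFs, receive probability zero automatically because the last factor vanishes.
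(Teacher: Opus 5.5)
Your proof is correct and is essentially the paper's argument. The paper proves $P_{m,s}(\lambda)=\omega_{m,s}(\lambda)/W(m,s)$ by induction on $m$, splitting on extend versus close, and your telescoping product along the path is that induction unrolled; your preliminary checks, that the recursively defined $W(m,s)$ equals the completion-weight sum and that every $W$ along a valid path is positive so the transition probabilities are defined, make explicit what the paper takes as holding by definition.
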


\begin{proof}
Consider any state $(m,s)$ and any valid completion $\lambda$ reachable from that state. Let $\omega_{m,s}(\lambda)$ denote the weight of a particular completion $\lambda$ from state $(m,s)$, defined as the product of the reciprocal factorials of the sizes of all layers completed from that state onward, including the current open layer once it is eventually completed. By definition,
\[
W(m,s)
=
\sum_{\lambda\in\Lambda_{m,s}}
\omega_{m,s}(\lambda),
\]
where the sum ranges over all valid completions from $(m,s)$.

We show by induction on $m$ that the probability of generating a particular valid completion $\lambda$ from state $(m,s)$ is
\[
P_{m,s}(\lambda)
=
\frac{\omega_{m,s}(\lambda)}{W(m,s)}.
\]

For $m=0$, no canalized outputs remain to be assigned. If the state is admissible, the current layer of size $s$ is the only remaining layer and
\[
\omega_{0,s}(\lambda)=W(0,s)=\frac{1}{s!},
\]
so $P_{0,s}(\lambda)=1$. In the inadmissible NCF state $(0,1)$, there is no valid completion.

Now suppose the statement holds for all states with fewer than $m$ remaining outputs. If $\lambda$ is obtained by extending the current layer, the first transition is
\[
(m,s)\longrightarrow(m-1,s+1)
\]
with probability
\[
\frac{W(m-1,s+1)}{W(m,s)}.
\]
No layer is closed at this transition, so
\[
\omega_{m,s}(\lambda)
=
\omega_{m-1,s+1}(\lambda).
\]
By the inductive hypothesis,
\[
P_{m,s}(\lambda)
=
\frac{W(m-1,s+1)}{W(m,s)}
\frac{\omega_{m-1,s+1}(\lambda)}
     {W(m-1,s+1)}
=
\frac{\omega_{m,s}(\lambda)}{W(m,s)}.
\]

If instead $\lambda$ is obtained by closing the current layer, the first transition is
\[
(m,s)\longrightarrow(m-1,1)
\]
with probability
\[
\frac{\frac{1}{s!}W(m-1,1)}{W(m,s)}.
\]
Let $\lambda'$ denote the remaining completion after the current layer has been closed. Then
\[
\omega_{m,s}(\lambda)
=
\frac{1}{s!}\omega_{m-1,1}(\lambda').
\]
Using the inductive hypothesis,
\[
P_{m,s}(\lambda)
=
\frac{\frac{1}{s!}W(m-1,1)}{W(m,s)}
\frac{\omega_{m-1,1}(\lambda')}
     {W(m-1,1)}
=
\frac{\omega_{m,s}(\lambda)}{W(m,s)}.
\]
Thus the result holds in both cases and hence for all $m$.

At the initial state $(k-1,1)$, the completion weight is precisely
\[
\omega_{k-1,1}(\lambda)=\omega(\lambda),
\]
which gives
\[
\mathbb P(\lambda)
=
\frac{\omega(\lambda)}{W(k-1,1)}.
\]
\end{proof}

By Appendix~\ref{comb mul}, for fixed $n$ and $k$ the number of distinct Boolean functions associated with a layer structure $\lambda$ is proportional to
\[
\omega(\lambda)
=
\frac{1}{\prod_{i=1}^r k_i!}.
\]
Theorem~\ref{thm:correctness} therefore shows that the DP algorithm samples each layer structure with probability proportional to the number of distinct Boolean functions having that structure. Conditional on the sampled layer structure, the canalizing variables, their assignment to layers, canalizing input values, the initial canalized output, and the admissible core function are sampled uniformly from their respective choices. Consequently, every distinct Boolean function of the prescribed class is generated with equal probability.

\subsection{Sampling $k$-canalizing functions uniformly at random}
\label{random sampling}

The DP algorithm yields a canalizing layer structure
\[
\lambda=(k_1,\ldots,k_r),
\qquad
k_1+\cdots+k_r=k
\]
for the $k$ canalizing variables.

When $k<n$, the remaining $n-k$ variables define a core function that is evaluated only when all canalizing variables receive their respective non-canalizing input values. If the goal is to sample uniformly from all Boolean functions with \emph{exact} canalizing depth $k$, we can therefore sample the core function uniformly at random from the set of all non-canalizing functions in $n-k$ variables, using rejection sampling. In particular, functions with exact canalizing depth $k=n-1$ do not exist, since a Boolean function with a single-variable core necessarily has canalizing depth $n$. This procedure is efficient because the probability that a random Boolean function is canalizing decreases exponentially with the number of inputs~\cite{he2016stratification,kadelka2024meta}.

However, if the goal is to sample uniformly from all Boolean functions with canalizing depth \emph{at least} $k$ (that is, from all $k$-canalizing functions), then the core function itself may be canalizing. In this case, the first canalizing layer of the core function may merge with the final layer of $\lambda$ whenever both layers have the same canalized output value. Such a merge increases the effective size of the final layer and creates a combinatorial overcount.

\begin{theorem}[Rejection correction for merged layers]
\label{thm:merged}
Let $k_r$ be the size of the last canalizing layer in $\lambda$, and let $k_{\mathrm{core}}$ be the size of the first canalizing layer of the core function, provided that its canalized output value agrees with the canalized output value of the final layer of $\lambda$. Then the merged final layer has size
\[
k' = k_r + k_{\mathrm{core}}.
\]
Without correction, any function arising from such a merge is oversampled by the factor
\[
\binom{k'}{k_r},
\]
because any subset of $k_r$ variables chosen from the $k'$ merged variables could have been designated as the final canalizing layer of the sampled representation.

To restore uniform sampling, such a function must therefore be accepted with probability
\[
\frac{1}{\binom{k_r+k_{\mathrm{core}}}{k_r}}.
\]
\end{theorem}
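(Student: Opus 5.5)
We fix a target function $f$ with canalizing depth at least $k$. We then count how many outputs of the uncorrected sampler produce $f$, and compare this count with the count for an unmerged function. The uncorrected sampler draws a layer structure $\lambda$ for the first $k$ canalizing variables by the DP of Theorem~\ref{thm:correctness}. It then draws the variables, their layer assignment, the input values, $b_1$, and a core function $g\not\equiv b_k$, all uniformly. Each such output is a triple consisting of (i) the ordered layer partition of $k$ chosen variables, (ii) the input and output values, and (iii) the core $g$. By Appendix~\ref{comb mul}, once we fix $\lambda$, these triples are in bijection with the choices counted in $\#\mathrm{CF}_{n,\geq k}(\lambda)$. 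Each triple then receives probability proportional to $\omega(\lambda)\cdot k_1!\cdots k_r!/k!$ times constants depending only on $n,k$. This product is independent of $\lambda$. So every triple is equally likely, and the probability of $f$ is proportional to the number $N(f)$ of triples representing it.

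The main step is to compute $N(f)$ from the unique layer decomposition of $f$ itself~\cite{he2016stratification}. Let the canonical layers of $f$ have sizes $\ell_1,\dots,\ell_t$. A triple represents $f$ exactly when its first $k$ variables, read in layer order, form a valid prefix of the canonical structure of $f$. The first layers $1,\dots,r-1$ of the triple must coincide with canonical layers of $f$, since the core is evaluated only after all $k$ variables are non-canalizing. The last sampled layer of size $k_r$ must then be a subset of the next canonical layer of $f$. Call the size of that canonical layer $k'$.

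There are two cases. If the core's first layer does not share the final canalized output, then no merge occurs, $k_r=k'$, and the triple is forced, so $N(f)=1$. If it does share the output, the core's first layer has size $k_{\mathrm{core}}$ and $k'=k_r+k_{\mathrm{core}}$. In this case any $k_r$-subset of the $k'$ variables of that canonical layer can serve as the sampled final layer, and the complementary variables become the core's first layer. The input values, the output values, and the restricted core are then determined by $f$. This gives $N(f)=\binom{k'}{k_r}$, which is the oversampling factor in the statement.

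The delicate point is checking that each of these $\binom{k'}{k_r}$ choices really is an admissible triple and that no other triples exist. Admissibility means the induced core satisfies $g\not\equiv b_k$. Exhaustiveness relies on the uniqueness of the layer decomposition and on the fact that variables inside a layer are interchangeable~\cite{li2013boolean}. One subtlety remains for $k_{\mathrm{core}}=n-k$, where the core is an NCF. There, membership of singleton final layers in the preceding layer must follow the convention of the unique decomposition. I would handle this by applying that convention to both the core and $f$ consistently.

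Having established $N(f)\in\{1,\binom{k'}{k_r}\}$, we finish by accepting each sampled triple with probability $1/N(f)$. Every function then has total acceptance mass proportional to $N(f)\cdot 1/N(f)=1$, so conditioning on acceptance gives the uniform distribution over all $k$-canalizing functions. Since $k_r$ and $k_{\mathrm{core}}$ can be read directly from the sampled triple, this acceptance rule is computable.
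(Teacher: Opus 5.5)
Your route is the paper's: count the parameter choices that realize a merged function, find $\binom{k'}{k_r}$ of them, and accept with the reciprocal. You are more careful than the paper on two points its proof leaves implicit. First, the uncorrected sampler is uniform on parameter triples, because $\omega(\lambda)\prod_i k_i!/k!$ does not depend on $\lambda$. Second, the $\binom{k'}{k_r}$ choices are exhaustive. For functions that genuinely arise from a merge, your canonical-prefix count $N(f)=\binom{k'}{k_r}$ is correct.

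The step that fails is the unmerged case. You claim that a triple represents $f$ exactly when it is a prefix of the canonical decomposition of $f$, and hence that $N(f)=1$ when no merge occurs. But the at-least-$k$ sampler admits the constant core $g\equiv 1-b_k$, which is one of the $2^{2^{n-k}}-1$ admissible cores. In that case $f$ is an NCF in its $k$ essential variables. Its last layer, of size $\ell$, can also be written as a layer of size $\ell-1$ followed by a singleton layer, with that variable's canalizing input flipped, the output flipped, and the constant core complemented. Any of the $\ell$ variables can serve as the singleton, so $N(f)=1+\ell$ even though no merge in the sense of the theorem occurs.

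For example, take $n=3$ and $k=2$. The function $x_1\vee x_2$ arises from $\lambda=(2)$ with $g\equiv 0$, and from $\lambda=(1,1)$ in either order with $g\equiv 1$. All three triples are accepted. By contrast, $x_1\vee(x_2\wedge x_3)$ arises only from two merged triples, each accepted with probability $\tfrac12$. After correction the former is therefore three times as likely as the latter.

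The singleton ambiguity you flagged thus bites in the last \emph{sampled} layer when the core is constant, not only inside an NCF core. The paper's proof has the same omission: its final sentence asserts that every function is then accepted with equal probability. To complete the argument you need either to exclude the constant core or to accept constant-core triples with probability $1/(1+\ell)$, with $\ell$ read off from the triple.
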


\begin{proof}
Suppose the uncorrected procedure produces a function whose final canalizing layer of size $k_r$ merges with the first canalizing layer of the core function of size $k_{\mathrm{core}}$. The resulting effective layer has size
\[
k' = k_r + k_{\mathrm{core}}.
\]
There are exactly
\[
\binom{k'}{k_r}
\]
ways to choose which $k_r$ of the $k'$ merged variables are treated as belonging to the sampled final canalizing layer, with the remaining $k_{\mathrm{core}}$ variables treated as part of the core. Each such choice yields the same Boolean function, so the uncorrected algorithm generates that function $\binom{k'}{k_r}$ times.

Accepting the proposed function with probability
\[
\frac{1}{\binom{k'}{k_r}}
\]
exactly compensates for this multiplicity. Hence every Boolean function is accepted with the same overall probability.
\end{proof}

When a rejection occurs, both the canalizing layer structure $\lambda$ and the core function are discarded, and the entire procedure is restarted. This is necessary because the DP algorithm samples layer structures with probability
\[
P(\lambda)\propto \omega(\lambda),
\]
which is the correct weighting for function-uniform sampling. If $\lambda$ were retained after a rejection and only the core function were resampled, then the same layer structure would be reused across multiple proposals and would therefore be sampled more often than prescribed by $\omega(\lambda)$. Restarting the full procedure ensures that each $(\lambda,\mathrm{core})$ pair is proposed with the correct probability, while the rejection step corrects only for the overcounting caused by merged layers.

The rejection step remains efficient even in the worst case. The probability of rejection is at most $5/8$, and this maximum occurs for the smallest admissible core size of two, i.e., when $k=n-2$. Moreover, the probability that a random function is canalizing decreases exponentially fast as the number of variables increases~\cite{he2016stratification}; since a merge can occur only when the randomly sampled core function is itself canalizing, the rejection probability is bounded above by an exponentially decreasing function of $n-k$. Consequently, the expected number of proposals required to obtain an accepted sample is at most
\[
\frac{1}{1-5/8}=\frac{8}{3},
\]
and approaches $1$ as the number of core variables increases. Thus, the rejection correction does not alter the practical scalability of the sampling procedure.

\bibliography{refs}

\end{document}